\newtheorem{thm}{Theorem}[section]
\newtheorem{lem}[thm]{Lemma}
\newtheorem{prop}[thm]{Proposition}
\newtheorem{cor}[thm]{Corollary}
\theoremstyle{definition}
\newtheorem{defn}[thm]{Definition}
\newtheorem{conj}[thm]{Conjecture}
\newtheorem{prob}[thm]{Problem}
\def \d {\delta}
\def \Si {\Sigma}
\def \O {\Omega}
\def \C {\mathbb{C}}
\def \E {\mathbb{E}}
\def \P {\mathbb{P}} 
\def \Q {\mathbb{Q}} 
\def \R {\mathbb{R}} 
\def \S {\mathbb{S}}
\def \Z {\mathbb{Z}}
\def \ind {\mathbbm{1}}
\def \B {\mathcal{B}}
\def \Ca {\mathcal{C}} 
\def \Diff {\mathcal{D}}
\def \l {\ell}
\def \SLE {\mathrm{SLE}}
\title{Classifying conformally invariant loop measures}
\author{St\'ephane Benoist}
\begin{document}
\begin{abstract}
We formulate a classification conjecture for conformally invariant families of measures on simple loops that builds on a conjecture of Kontsevich and Suhov \cite{KontSuh}. The main example in this class of objects was constructed by Werner \cite{Wer_loops}. We present partial results towards the algebraic step of this classification.

Solving this conjecture would provide another argument explaining why planar statistical mechanics models with conformally invariant scaling limits naturally occur in a one-parameter family, together with the dynamical characterization of SLE via Schramm's central limit argument, and with the conformal field theory point of view and its central charge parameter.

\end{abstract}

\maketitle
\tableofcontents

\section{Motivation}

We are interested in describing collections of measures on sets of simple loops that are conformally invariant scaling limits of interfaces found in two-dimensional statistical mechanics models. Let us first give an example of such a loop measure, coming from the Ising magnetization model.

\subsection{The Ising loop measure}

The Ising model on a subgraph $\mathcal{G}$ of the square grid $\Z^2$ at inverse temperature $\beta>0$
is a measure on configurations $\left(\sigma_{x}\right)_{x\in\mathcal{G}}$
of $\pm1$ spins located at the vertices of $\mathcal{G}$. A configuration appears with probability proportional to $\exp\left(-\beta H\left(\sigma\right)\right)$,
where the energy $H$ is given by $-H\left(\sigma\right)=\sum_{x\sim y}\sigma_{x}\sigma_{y}$
(the sum is over all pair of adjacent vertices of $\mathcal{G}$).

When the temperature is high (i.e. $\beta$ is small), we tend to see configurations that are very disordered at microscopic scale (i.e. spins are virtually independent): one can imagine that the heat agitation of each atom is enough to overcome the energy constraint, i.e. constraints due to the interactions between atoms. On the other hand, at low temperatures (i.e. high values of $\beta$), the bias $e^{-2\beta}$ will exclude configurations with too many disagreeing neighbors. The picture tends to be frozen (all spins have same sign) at the microscopic level. There is a unique critical parameter $\beta_c=\frac{1}{2}\ln\left(\sqrt{2}+1\right)$, where the Ising model exhibits an intermediary behavior between disordered and being frozen.

Interfaces of the critical Ising model are known to converge to a conformally invariant scaling limit \cite{SmiChe_Ising}, and this allows us to construct a continuous loop measure from discrete Ising interfaces. Given a simply-connected domain $\O$ in the plane, we approximate it for each $\delta>0$ by a discrete domain $\O^\delta$ which is a collection of faces of a square lattice of mesh size $\delta$. We consider the critical Ising model on the graph $\O^\delta$, with $+$ boundary conditions, i.e. we fix the spins on the boundary of $\O^\delta$ to be $+1$. To a spin configuration $\sigma$, we can associate a random collection of curves $c(\sigma)$: the set of all interfaces i.e. the set of loops on the graph dual to $\O^\delta$ that wind between $+$ and $-$ spins. We call $m_{\O^\delta}$ the measure on collections of loops $c$ that are interfaces of the Ising spin model. In the scaling limit $\delta\to 0$, the measure $m_{\O^\delta}$ converges (loops are compared using the supremum norm up to reparametrization) towards a measure $m_\O$ called CLE$_3$ \cite{BeHo_CLEIs}. The collection of measures $(m_\O)_\O$ is then conformally invariant: given two simply-connected domains $\O$ and $\O'$ in the plane, and a conformal isomorphism $\phi:\O\rightarrow \O'$ between them, we have
$$
m_{\O'} = \phi_\ast\left(m_{\O}\right).
$$

The convergence of the whole collection $c$ of Ising loops implies the convergence of the measures $\mu_{\O^\delta}$ on single Ising interface loops $\ell$
$$
\mu_{\O^\delta}({\rm d} \ell)= \E^{\sigma}\left[\sum_{\ell_0\in c(\sigma)} \delta_{\ell_0}({\rm d} \ell)\right] = \int_{c\in\mathcal{C}} \sum_{\ell_0\in c} \delta_{\ell_0}({\rm d} \ell)  m_{\O^\delta}({\rm d} c)
$$
to a conformally invariant collection of measures $\mu_\Omega$ (of infinite mass) which describes the loops of a CLE$_3$:
$$
\mu_{\O}({\rm d} \ell)=\int_{c\in\mathcal{C}} \sum_{\ell_0\in c} \delta_{\ell_0}({\rm d} \ell)  m_{\O}({\rm d} c).
$$

\subsection{Loops and interactions}

One can wonder whether the Ising loop measure $\mu_{\O}({\rm d} \ell)$ is characterized by the macroscopic interactions of the statistical mechanics model. One way to make sense of this question is by keeping track of interactions by investigating how the position of the boundary of the domain boundary influences the shape of the loops.

At the discrete level, we can do the computation. Let $\O'$ be a subdomain of $\O$, and let us consider a loop $\ell \subset {\O'}^\delta$. We can compute the respective likelihood to see the loop $\ell$ as an Ising loop in ${\O'}^\delta$ and $\O^\d$. This Radon-Nikodym derivative can indeed be written as a ratio of Ising partition functions:
\begin{equation}\label{eq:Z}
\frac{{\rm d}\mu_{{\O'}^\delta}}{{\rm d}\mu_{\O^\delta}}(\ell)=\frac{Z_{{\O'}^\delta\setminus\ell}Z_{\O^\delta}}{Z_{{\O'}^\delta}Z_{\O^\delta\setminus\ell}},
\end{equation}
where, for a discrete domain $\mathcal{G}$, the partition function is given by
$$
Z_{\mathcal{G}}=\sum_{\sigma\in\{\pm 1\}^{\mathcal{G}}}e^{-\beta_c H(\sigma)}.
$$ 
The right hand side of (\ref{eq:Z}) may be tractable in the scaling limit and converge to a function that we represent as $\exp(f(\ell,{\O'},\O))$, for a certain function $f$ (see Section \ref{sec:step1} for a description of the function $f$). This step is well-understood for the uniform spanning tree model \cite{BeDu_SLE2}.

The continuous Ising loop measure $\mu_\O$ should then react to domain restriction (i.e. boundary deformation) in the following way:
\begin{eqnarray}\label{eq:restr}
\frac{{\rm d}\mu_{\O'}}{{\rm d}\mu_\O}(\ell)=\exp\left(f(\ell,\O',\O)\right)\ind_{\ell\subset\O'}.
\end{eqnarray}
Now, suppose that we are given a family of measures $(\widetilde{\mu}_\O)_\O$ whose behavior under restriction is also given by the Ising restriction formula (\ref{eq:restr}).
Is $\widetilde{\mu}_\O$ the Ising loop measure, i.e is it true that for any domain $\Omega$, $\widetilde{\mu}_\O=\mu_\O$?

The above discussion could (at least conjecturally) be repeated for any discrete model exhibiting conformal invariance: the scaling limits of single loops in such models should fall in the class of families of measures that satisfy (\ref{eq:restr}) for some function $f$. This leads to the following question.
\begin{prob}\label{prob}
Classify all families of measures on single loops that can a priori appear as scaling limits, i.e. classify conformally invariant families of measures on loops, together with their restriction property.
\end{prob}
Some aspects of this classification are closely related to a question of Malliavin \cite{Mal_diffcircle} on existence of loop measures, as well as to a conjecture of Kontsevich and Suhov \cite{KontSuh}.

This classification would conjecturally provide another argument explaining why planar statistical mechanics models with conformally invariant scaling limits naturally occur in a one-parameter family. Arguments with similar conclusions include the dynamical characterization of SLE via Schramm's central limit argument - and the related CLE classification \cite{ShWe_CLEcharac}, as well as the conformal field theory classification, which (loosely) extracts a real parameter (the central charge) out of the action of the conformal group on local observables of the model.

\subsection{Classification of loop measures}\label{sec:steps}

The classification question (Problem \ref{prob}) splits into three steps on which we elaborate in this section.

The first step is to classify the possible restriction formulas, i.e. to understand what restriction functions $f(\ell,\O',\O)$ can appear in the formula (\ref{eq:restr}). Indeed, the function $f$ need to satisfy some algebraic conditions in order to appear as such a Radon-Nikodym derivative. The second step of the classification would be to prove uniqueness of the loop measures, i.e. to prove that there is at most one collection of loop measures for each type of boundary interaction $f$. Thirdly and finally, one should construct all these measures.

\subsubsection{Restriction functions}\label{sec:step1}

The first step, the question of classifying possible restriction formulas is in part an algebraic question that can be rephrased as a cohomology computation on the space of loop-decorated Riemann surfaces.

We conjecture in this paper (Conjecture \ref{conj}) that the restriction functions, up to absolute continuity of the underlying measures, a priori reduce to a one-parameter family for algebraic reasons. This one-parameter family can be written as $f=c M$, where the quantity $M(\ell,\O',\O)$ is (up to a factor) as in \cite[Proposition 2.29]{BeDu_SLE2} and can be interpreted as the mass of Brownian loops in $\Omega$ that intersect both $\ell$ and $\O\setminus\O'$ (see \cite[Section 4]{LW}). Moreover (with the right choice of normalization factor for $M$), the central charge $c$ is related to the SLE parameter $\kappa$ by 
$$
c=\frac{(3\kappa-8)(6-\kappa)}{2\kappa}.
$$
Conjecturally, loop measures exist only when $c\leq 1$ for probabilistic reasons \cite{KontSuh}. This situation is reminiscent of the classification of restriction measures \cite{LSW3} , that are a priori classified by one positive real parameter $\alpha>0$, and later shown to only exist for $\alpha\geq 5/8$ for probabilistic reasons. Moreover, note that quantities similar to the Brownian loop mass $c M(\ell,\O',\O)$ appear when one studies how chordal SLE$_\kappa$ depends on the boundary of the domain \cite[Section 7.2]{LSW3}.

\subsubsection{Characterization}\label{sec:step2}

The second step of the classification program, the uniqueness of the loop measure having a fixed restriction property, is a conjecture of Kontsevich and Suhov \cite{KontSuh}. It has been proved by Werner for $\kappa = 8/3$ \cite{Wer_loops}, which corresponds to trivial interactions with the boundary, i.e. $f(\ell,\Omega',\Omega)=0$. The same result was achieved in \cite{ChaPic_Werloop} by considering the structure of infinitesimal deformations of domains.

\subsubsection{Construction}\label{sec:step3}
Loop measures were built by Werner \cite{Wer_loops} for $\kappa=8/3$ as boundaries of Brownian loops. Loop measures for $\kappa=2$ were constructed as a scaling limit of a discrete loop-erased walk \cite{KasKen_RandomCurves,BeDu_SLE2}.

For general values of the SLE parameter $0<\kappa\leq 4$ or equivalently general values of the central charge $c\leq 1$ (this is the simple curve regime, and conjecturally covers all simple loop measures), the loop measures are constructed in a work in preparation \cite{BeDu_SLEkloops,BeDu_SLE4loops} by finding them as flow lines of the Gaussian free field in the imaginary geometry coupling of Miller and Sheffield \cite{MilSheIG1}.

\subsection{Content of this paper}

We now focus on the first step of the classification (Problem \ref{prob}), i.e. we want to understand all possible functions $f$ than can appear in the formula (\ref{eq:restr}). In Section \ref{sec:resfun}, we setup this algebraic question as a cohomology problem. In Section \ref{sec:cohom}, we discuss a couple of results on the corresponding cohomology group (Propositions \ref{prop:non-trivial} and \ref{prop:obstruction}).

\section{The algebraic classification of restriction functions}\label{sec:resfun}

By Riemann surface, we mean a surface $\Sigma$ equipped with a complex structure, with finitely many handles, finitely many boundary components, and no punctures. For our purposes, there is no loss of generality by thinking of $\Sigma$ as an open subset of the complex plane $\mathbb{C}$ with finitely many holes (see Proposition \ref{prop:ess}). An embedding $\Sigma_1 \hookrightarrow \Sigma_2$ is a conformal injective map from the Riemann surface $\Sigma_1$ to the Riemann surface $\Sigma_2$. A simple loop $\ell$ is the image of the unit circle by an injective continuous map: the map is considered up to reparametrization, including rerooting and orientation switching. The topology on loops we will use is the topology of uniform distance up to reparametrization, and we work with the corresponding Borel $\sigma$-algebra.

\subsection{Setup}\label{sec:setup}
We now consider families of $\sigma$-finite measures $(\mu_\Si)_\Si$ indexed by Riemann surfaces, where $\mu_\Si$ is a measure on the set of simple loops $\l$ on $\Si$. Implicit in this formalism is that such a family of measures is conformally invariant.
\begin{defn}
A family $(\mu_\Si)_\Si$ is Malliavin-Kontsevich-Suhov (MKS) if it satisfies a restriction property as in (\ref{eq:restr}): if $\Si_1\subset\Si_2$ are two Riemann surfaces, then
\begin{eqnarray}\label{eq:restr2}
\frac{{\rm d}\mu_{\Si_1}}{{\rm d}\mu_{\Si_2}}(\ell)= e^{f_\mu(\l,\Si_1,\Si_2)} \mathbbm{1}_{\l\subset\Si_1},
\end{eqnarray}
where $f_\mu(\l,\Si_1,\Si_2)$ is a priori an arbitrary function that we call restriction function.
\end{defn}

Suppose that we have two MKS families of measures $\mu$ and $\nu$ that are in the same absolute continuity class: one can find a (conformally invariant, i.e. coordinate independent) function $g(\l,\Si)$ defined on pairs formed by a Riemann surface $\Sigma$ and a loop $\l\subset\Si$ such that $\nu=e^{-g}\mu$. Then, the restriction function $f_\nu$ associated to the measure $\nu$ can be expressed as:
\begin{eqnarray}\label{eq:cb}
f_\nu(\l,\Si_1,\Si_2)=f_\mu(\l,\Si_1,\Si_2) + g(\l,\Si_2) - g(\l,\Si_1).
\end{eqnarray}
Moreover, the inverse operation also makes sense: given an MKS family of measures $\mu$ and a conformally invariant function $g(\l,\Si)$, we can define another MKS family of measures $\nu$ by $\nu=e^{-g}\mu$. The restriction function $f_\nu$ is then given by (\ref{eq:cb}). An example where two families of measures $\mu$ and $\nu$ are related in this way is when these families describe loops arising from the same statistical mechanics model, but with different boundary conditions.

Understanding the set of restriction functions $f$ modulo the equivalence relation (\ref{eq:cb}) is a first step towards classifying the absolute continuity classes of MKS families of measures  (as discussed in Section \ref{sec:steps}).

\subsection{The cohomology of loops}\label{sec:defcoho}
We are thus led to consider the following problem.

We call a configuration either
\begin{itemize}
\item a pair $(\l,\Si)$ consisting of a Riemann surface $\Sigma$ and a loop $\ell\subset\Si$, or
\item a triple $(\l,\Si_1,\Si_2)$ consisting of two Riemann surfaces $\Si_1\subset\Si_2$ and a loop $\ell\subset\Si_1$.
\end{itemize}
Which of the two we consider will be clear from context at any given point.

We define the set $\mathcal{C}$ of cocycles as the set of real-valued functions $f(\l,\Si_1,\Si_2)$ on configurations such that:
\begin{itemize}
\item $f$ is an additive cocycle, i.e. given three Riemann surfaces $\Si_1\subset\Si_2\subset\Si_3$ and a loop $\l\subset\Si_1$, we have that 
\begin{eqnarray}\label{eq:cocycle}
f(\l,\Si_1,\Si_3)=f(\l,\Si_1,\Si_2)+f(\l,\Si_2,\Si_3).
\end{eqnarray}
\item $f$ is conformally invariant (i.e. coordinate-independent).
\item $f$ is continuous in $\l$ (for the topology of uniform convergence up to reparametrization).
\end{itemize}
Note that the first item is satisfied by restriction functions. The second item is trivially satisfied from the formalism, but we insist on the fact that $f$ should not depend on how coordinates are chosen on $\Sigma_2$, e.g. how $\Sigma_2$ is embedded in a larger Riemann surface (see the non-trivial Lemma \ref{lem:rep}). 
The third item is a convenient way to enforce the measurability and the (local) integrability of $e^f$ in (\ref{eq:restr2}). However, this is more than a technical condition (see the comment after Proposition \ref{prop:obstruction}).

Let us now define the set of coboundaries $\B$  as the set of real-valued functions $f(\l,\Si_1,\Si_2)$ on configurations such that there exists a function $g(\l,\Si)$ on configurations $\l\subset\Si$ that satisfies:
\begin{itemize}
\item $f(\l,\Si_1,\Si_2)=g(\l,\Si_2)-g(\l,\Si_1).$
\item $g$ is conformally invariant (i.e. coordinate-independent).
\item $g$ is continuous in $\l$ (for the topology of uniform convergence up to reparametrization).
\end{itemize}
Note that the function $g$ associated to a coboundary is unique up to a global additive constant, and thus we can always assume that $g(\S^1,\C\P^1)=0$. Moreover, note that every coboundary is a cocycle, i.e. $\B\subseteq\Ca$.

The classification of absolute continuity classes of MKS family of measures (as discussed in Section \ref{sec:setup}) amounts to understanding the cohomology of restriction functions, i.e. to understand the set of all cocycles modulo coboundaries. 
\begin{conj}\label{conj}
The cohomology group $\mathcal{H}=\Ca\slash \B$ is a one-dimensional real vector space.
\end{conj}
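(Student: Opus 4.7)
The plan is to first observe that the Brownian loop cocycle $M$ of Proposition \ref{prop:non-trivial} represents a non-trivial class, so it would suffice to prove the upper bound $\dim\Ha\leq 1$: that every cocycle is cohomologous to a real multiple of $M$. My strategy for this is to linearize --- use the additivity (\ref{eq:cocycle}) to reduce a global cocycle to its infinitesimal behavior under small boundary deformations, and then recognize the resulting data as a Lie algebra cohomology class on the Witt algebra $\mathrm{Vect}(\S^1)$ of holomorphic vector fields along the loop. The expected punchline is that $\Ha$ is generated, at the infinitesimal level, by the Gelfand--Fuchs (Virasoro) 2-cocycle, which would be precisely the algebraic source of the central charge $c$ discussed in Section \ref{sec:step1}.

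More concretely, I would first fix the model loop $\l_0=\S^1\subset\C\P^1$ and, using Proposition \ref{prop:ess}, restrict attention to planar configurations. By uniformization, a generic configuration $(\l,\Si)$ is then described by a pair of univalent maps $(\vf^+,\vf^-)$ from standard half-annuli on each side of $\S^1$, and its moduli space is modeled on $\mathrm{Diff}(\S^1)$ quotiented by the finite-dimensional M\"obius actions from either side. A cocycle $f\in\Ca$, upon subtraction of a coboundary absorbing the dependence on each individual univalent map, should then become additive in the pair under composition, so that $f$ descends to a 1-cocycle on a semigroup whose Lie algebra is $\mathrm{Vect}(\S^1)$. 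Differentiating this 1-cocycle at the identity produces a 2-cocycle on $\mathrm{Vect}(\S^1)$, and by the classical Gelfand--Fuchs theorem $H^2(\mathrm{Vect}(\S^1),\R)$ is one-dimensional, generated by the Virasoro cocycle $\omega(v,w)=\int_{\S^1}(v'''w-w'''v)\,{\rm d}\theta$. The one-parameter family this produces is the desired classification at the infinitesimal level, and the Brownian loop cocycle $M$ should be shown to realize the generator.

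The hard part will be the passage from the infinitesimal classification back to the cohomology of $\Ca$. This requires: (i) showing that any inclusion $\Si_1\subset\Si_2$ can be joined by a smooth one-parameter family of intermediate embeddings so that a cocycle is reconstructible from its infinitesimal version; (ii) path-independence, a Frobenius-type statement that should follow from the cocycle identity applied to two-parameter deformations; and most delicately, (iii) carrying the argument through using only continuity of $f$ in $\l$ rather than smoothness --- the phenomenon behind Proposition \ref{prop:obstruction} and the comment following it is where I expect the main technical obstruction to lie, since nothing in the definition of $\Ca$ prevents cocycles with essentially distributional infinitesimal form. Overcoming this may require either a regularization argument (smoothing $f$ by averaging over conformally invariant families of configurations) or strengthening the continuity hypothesis in the definition. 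A final verification step would compare the resulting cocycle with $M$ on an explicit family of slit deformations of the disk, identifying the free parameter with the central charge $c$.
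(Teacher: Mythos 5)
The statement you are addressing is stated in the paper as a \emph{conjecture}, and the paper does not prove it: it only establishes the lower bound $\dim\Ha\geq 1$ (Proposition \ref{prop:non-trivial}, via the existence of the $\SLE_2$ and $\SLE_{8/3}$ loop measures) and the structural result that every cocycle is already an algebraic coboundary on essential analytic configurations, the only missing ingredient being the continuity of the primitive $g$ (Proposition \ref{prop:obstruction}). Your proposal should therefore be judged as a research program rather than a proof, and as such it has genuine gaps at exactly the points you flag yourself.

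The central gap is the step ``differentiating the 1-cocycle at the identity.'' Elements of $\Ca$ are only assumed continuous in $\l$; nothing in the definition gives differentiability of $f$ along a one-parameter family of deformations, so the infinitesimal 2-cocycle on $\mathrm{Vect}(\S^1)$ is not defined and the Gelfand--Fuchs computation has nothing to act on. This is precisely the regularity obstruction the paper isolates: by Proposition \ref{prop:obstruction} every cocycle already \emph{is} a coboundary of some (possibly badly discontinuous) $g$ on analytic essential configurations, so the entire content of the conjecture is a regularity statement, which your items (i)--(iii) defer rather than resolve. Two further points would need care even granting smoothness: (a) the passage from the Lie-algebra classification back to the group (a van Est--type integration for the infinite-dimensional semigroup of annuli) must produce a \emph{continuous} primitive, i.e.\ an element of $\B$ and not merely a set-theoretic one --- this is again the same obstruction; and (b) your reduction of the moduli of configurations to $\mathrm{Diff}(\S^1)$ modulo M\"obius implicitly uses that $f$ depends on the embedded annulus only through its conformal type, which is the content of Lemma \ref{lem:rep} and rests on the perfectness of $\Diff^+$ (Appendix \ref{sec:app}); you should invoke it explicitly. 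The Gelfand--Fuchs heuristic is the right explanation for why the answer should be one-dimensional and why the parameter is a central charge, but as written the proposal does not close the conjecture.
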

We use here the word cohomology in the sense of understanding the quotient of a space $\Ca$ with additive properties such as (\ref{eq:cocycle}) by telescopic sums $\B$. The cohomology space $\mathcal{H}$ carries information on the structure of the space of all loop-decorated Riemann surfaces modulo conformal equivalence.

\section{On the cohomology of loops}\label{sec:cohom}

\subsection{The cohomology is non-trivial}

\begin{prop}\label{prop:non-trivial}
The cohomology group $\mathcal{H}$ is non-trivial.
\end{prop}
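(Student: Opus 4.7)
The plan is to exhibit an explicit element of $\Ca$ that fails to be a coboundary. The natural candidate, consistent with Section~\ref{sec:step1}, is the Brownian loop mass
\[
M(\ell,\Sigma_1,\Sigma_2) := \mu^{\mathrm{br}}_{\Sigma_2}\bigl(\{\gamma : \gamma\cap\ell\neq\emptyset \text{ and }\gamma\cap(\Sigma_2\setminus\Sigma_1)\neq\emptyset\}\bigr),
\]
where $\mu^{\mathrm{br}}_{\Sigma_2}$ denotes the Brownian loop measure on $\Sigma_2$.

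Verifying $M\in\Ca$ is routine. The cocycle identity (\ref{eq:cocycle}) follows from the conformal restriction property of the Brownian loop measure --- namely, for $\Sigma_2\subset\Sigma_3$, the measure $\mu^{\mathrm{br}}_{\Sigma_2}$ coincides with the restriction of $\mu^{\mathrm{br}}_{\Sigma_3}$ to loops contained in $\Sigma_2$ --- by partitioning the loops in $\Sigma_3$ meeting both $\ell$ and $\Sigma_3\setminus\Sigma_1$ into those staying inside $\Sigma_2$ (which contribute $M(\ell,\Sigma_1,\Sigma_2)$) and those exiting $\Sigma_2$, which are exactly the loops meeting $\ell$ and $\Sigma_3\setminus\Sigma_2$ and thus contribute $M(\ell,\Sigma_2,\Sigma_3)$. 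Conformal invariance and continuity in $\ell$ are inherited from the corresponding properties of $\mu^{\mathrm{br}}$ (continuity uses monotone convergence for the mass of loops hitting a variable compact set).

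The heart of the proof --- and the main obstacle --- is showing $M\notin\B$. A representation $M=g_2-g_1$ with $g$ conformally invariant would force $M(\ell,\Sigma_1,\Sigma_2)$ to depend only on the separate conformal equivalence classes of $(\ell,\Sigma_1)$ and $(\ell,\Sigma_2)$, independently of how $\Sigma_1$ sits inside $\Sigma_2$. The intended contradiction is that $M$ genuinely detects the conformal ``proximity'' between $\ell$ and $\Sigma_2\setminus\Sigma_1$, which is not a function of the two pair-types alone. Concretely, one aims to exhibit two triples $(\ell,\Sigma_1^A,\Sigma_2)$ and $(\ell,\Sigma_1^B,\Sigma_2)$ with $(\ell,\Sigma_1^A)\cong(\ell,\Sigma_1^B)$ as conformal pairs but with different values of $M$ --- exploiting the topological flexibility of $\Sigma_2\setminus\Sigma_1$ when $\Sigma_2$ is allowed positive genus or several boundary components, so that the inclusion $\Sigma_1\hookrightarrow\Sigma_2$ admits deformations not realized by any conformal automorphism of $\Sigma_2$ fixing $\ell$. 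Producing such a pair explicitly and verifying (for instance by a scaling-limit computation in which the ``extra'' region $\Sigma_2\setminus\Sigma_1$ is shrunk to a small conformal bubble placed at distinct conformal positions) that the Brownian loop masses truly differ is where the real work lies.
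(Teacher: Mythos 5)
Your candidate cocycle is the right one, and the verification that $M\in\Ca$ is fine, but the proof has a genuine gap exactly where you locate ``the real work'': the non-coboundary step is only a plan, and worse, the plan is one that cannot succeed. You propose to find two triples $(\ell,\Sigma_1^A,\Sigma_2)$ and $(\ell,\Sigma_1^B,\Sigma_2)$ with $(\ell,\Sigma_1^A)\cong(\ell,\Sigma_1^B)$ but $M^A\neq M^B$, i.e.\ to show that $M$ detects the embedding beyond the conformal types of the two pairs. But the cocycle identity alone already rules this out on a dense set of configurations: this is precisely the content of Lemma \ref{lem:rep} and Proposition \ref{prop:obstruction}. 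Concretely, taking a small annular neighborhood $A'$ of $\ell$ contained in both $\Sigma_1^A$ and $\Sigma_1^B$ and telescoping via (\ref{eq:cocycle}), the difference $M^A-M^B$ reduces to a quantity of the form $f(\S^1,\psi(A''),\C\P^1)-f(\S^1,A'',\C\P^1)=\rho(\psi)$ for an analytic circle diffeomorphism $\psi$, and $\rho$ vanishes identically because $\Diff$ admits no non-trivial morphism to $(\R,+)$ (Corollary \ref{cor:morph}). So for analytic loops every cocycle, $M$ included, \emph{does} depend only on the conformal types of $(\ell,\Sigma_1)$ and $(\ell,\Sigma_2)$; a pointwise, coordinate-independent (but discontinuous) primitive $g$ exists there. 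The true obstruction is the regularity of $g$, not its existence, so no finite collection of configurations can exhibit the failure --- your strategy is looking for the obstruction in the wrong place.

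The paper's argument is entirely different and essentially probabilistic: it takes the cocycle $f^{\SLE_2}=-2M$ realized as the restriction function of the $\SLE_2$ loop measure $\mu^{\SLE_2}$. If $f^{\SLE_2}$ were a coboundary $g(\cdot,\Si_2)-g(\cdot,\Si_1)$, then $\nu=e^g\mu^{\SLE_2}$ would satisfy the exact restriction property (\ref{eq:truerestr}); by Werner's uniqueness theorem this forces $\nu=C\mu^{\SLE_{8/3}}$, contradicting the fact that $\mu^{\SLE_{8/3}}$ and $\mu^{\SLE_2}$ are mutually singular (their curves have different Hausdorff dimensions $1+\kappa/8$). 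If you want to salvage your approach, you would need an argument that no \emph{continuous} conformally invariant $g$ can represent $M$, which is a quantitatively different and harder statement than the one you set out to prove.
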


We give a probabilistic proof that relies on the existence of SLE loop measures. It would be interesting to have a purely algebraic proof.

\begin{proof}
Consider the point in cohomology $f^{\SLE_2}(\ell,\Sigma_1,\Sigma_2)=-2M(\ell,\Sigma_1,\Sigma_2)$ that is associated to the $\SLE_2$ loop measure $\mu^{\SLE_2}$ built in \cite{BeDu_SLE2} (the precise definition of the mass of Brownian loop $M$ does not matter here). We argue that $f^{\SLE_2}$ cannot be a coboundary. Indeed, if it were, one could write $f^{\SLE_2}(\ell,\Sigma_1,\Sigma_2)=g(\ell,\Sigma_2)-g(\ell,\Sigma_1)$ for some function $g$. The loop measure $\nu=e^g\mu^{\SLE_2}$ would then satisfy the exact restriction property
\begin{equation}\label{eq:truerestr}
\nu_{\Si_1}=\nu_{\Si_2} \mathbbm{1}_{\l\subset\Si_1},
\end{equation}
for all Riemann surfaces $\Si_1\subset\Si_2$.

The only loop measure (up to global scaling) satisfying (\ref{eq:truerestr}) is the $\SLE_{8/3}$ loop measure $\mu^{\SLE_{8/3}}$ \cite{Wer_loops}, and so we would have $\nu=C \mu^{\SLE_{8/3}}$, i.e. $e^g\mu^{\SLE_2}=C \mu^{\SLE_{8/3}}$. However, the measure $\mu^{\SLE_{8/3}}$ does not belong to the absolute continuity class of $\mu^{\SLE_2}$ (e.g. because the Hausdorff dimension of an $\SLE_\kappa$ curve for $\kappa\leq 8$ is given by $1+\frac{\kappa}{8}$ \cite{Bef_Haus}), a contradiction.
\end{proof}

\subsection{The obstruction lies in the regularity of $g$}

We now prove that the obstruction to a cocycle being a coboundary lies in the regularity of $g$ (Proposition \ref{prop:obstruction}).

\begin{defn}
We call a configuration $(\ell,A)$ essential if $A$ is an annulus, and if the loop $\l$ is homotopically non-trivial in $A$, i.e. if $\ell$ disconnects the two boundary components of $A$. A configuration $(\l,A_1,A_2)$ is called essential if $(\l,A_1)$ and $(\l,A_2)$ are.
\end{defn}

We say that a loop $\l$ drawn on a surface $\Si$ is analytic, if we can find an annular neighborhood $A$ of $\l$ in $\Si$ and a conformal embedding $\phi:A\hookrightarrow\C\P^1$ such that the configuration $(\l,A)$ is essential and $\phi(\ell)=\S^1$. Note that this is equivalent to asking that there exists an analytic parametrization of the loop $\l$ by the unit circle $\S^1$.
\begin{defn}
We call a configuration $(\l,\Si)$ (resp. $(\l,\Si_1,\Si_2)$) analytic if the loop $\l$ is.
\end{defn}
Note that a configuration being analytic is not a condition on the roughness of the embedding $\partial\Si_1\hookrightarrow\Si_2$ (which may even be ill-defined).

We now prove, in the spirit of \cite{KontSuh}, that all the structure of restriction functions comes from essential configurations, i.e. from annular regions.
\begin{prop}\label{prop:ess}
If $f$ is a coboundary for essential configurations, then $f$ is a coboundary.
\end{prop}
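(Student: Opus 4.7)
The plan is to extend the essential coboundary witness $g$ to a function $\tilde g(\l,\Si)$ defined on \emph{all} configurations, and to check that $\tilde g$ still witnesses $f$ as a coboundary. The strategy is to trivialize $f$ locally near the loop (using the hypothesis) and to transport this to arbitrary Riemann surfaces via the cocycle relation. The key input is a topological preliminary: every simple loop $\l\subset\Si$ admits an essential annular neighborhood $A\subset\Si$, i.e.\ an open subset such that $(\l,A)$ is essential. This follows from the Schoenflies theorem applied to the Jordan curve $\l$ in a chart of $\Si$ (giving a topological annulus around $\l$), combined with uniformization of doubly connected open Riemann surfaces. I would then set
$$\tilde g(\l,\Si) := g(\l,A) + f(\l,A,\Si)$$
for any such choice of $A$.

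The first check is well-definedness. If $A$ and $A'$ are two essential annular neighborhoods of $\l$ in $\Si$, one can always produce a thinner essential annular neighborhood $A''\subset A\cap A'$ (for instance, by choosing a bicollar parametrization $\S^1\times(-1,1)$ of $A$ around $\l$ and shrinking the second factor until it fits inside $A\cap A'$). Applying the hypothesis to the essential triples $(\l,A'',A)$ and $(\l,A'',A')$ gives $f(\l,A'',A)=g(\l,A)-g(\l,A'')$ and similarly for $A'$; combined with the cocycle identity for $f$ one obtains
$$g(\l,A)+f(\l,A,\Si) = g(\l,A'')+f(\l,A'',\Si) = g(\l,A')+f(\l,A',\Si).$$
Conformal invariance of $\tilde g$ is automatic from that of $g$ and $f$. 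The coboundary relation on a general triple $(\l,\Si_1,\Si_2)$ follows similarly: choose an essential annular neighborhood $A$ of $\l$ in $\Si_1$ (which remains one in $\Si_2$); both values $\tilde g(\l,\Si_1)$ and $\tilde g(\l,\Si_2)$ are computed using the same $A$, so the $g(\l,A)$ terms cancel, and the cocycle identity leaves $f(\l,A,\Si_2)-f(\l,A,\Si_1)=f(\l,\Si_1,\Si_2)$.

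The main technical obstacle I anticipate is the continuity of $\tilde g$ in $\l$. Given a sequence $\l_n\to\l$ (uniformly up to reparametrization), I want to assert that for $n$ large the $\l_n$ all lie in a fixed essential annular neighborhood $A$ of $\l$ and are homotopically non-trivial there, so that $\tilde g(\l_n,\Si) = g(\l_n,A)+f(\l_n,A,\Si)$ and the right-hand side converges to $\tilde g(\l,\Si)$ by the continuity assumptions on $g$ and $f$. Once $A$ is fixed, the convergence is immediate; the delicate part is producing this common $A$ for all nearby loops. For smooth loops this is free from the tubular neighborhood theorem, but for arbitrary Jordan loops one must invoke Schoenflies in a neighborhood of $\l$ to obtain a uniform collar $A$, and then argue that uniform closeness forces $\l_n\subset A$ and $\l_n$ homotopic to $\l$ inside $A$ for $n$ large. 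This topological step is, I expect, where the bulk of the technical care will go.
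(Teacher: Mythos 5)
Your proposal is correct and follows essentially the same route as the paper: define $g(\l,\Si)=g(\l,A)+f(\l,A,\Si)$ for an essential annular neighborhood $A$ of $\l$, then verify independence of $A$, the coboundary identity, and continuity via the cocycle relation. You are in fact somewhat more explicit than the paper on two points it treats tersely --- reducing two arbitrary annuli to a common refinement $A''\subset A\cap A'$, and fixing a single annulus $A$ that works for all loops in a neighborhood of $\l$ when checking continuity --- which is welcome rather than a divergence.
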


\begin{proof}
By assumption, we can find a continuous function $g(\l,A)$ defined for configurations $\l \subset A$ where $\l$ is a homotopically non-trivial loop in an annulus $A$ and such that $f(\l,A_1,A_2)=g(\l,A_2)-g(\l,A_1)$ for all essential configurations $\l\subset A_1\subset A_2$. Given a configuration $\l\subset\Si$, let us pick an annulus $A\subset\Sigma$ such that $(\ell,A)$ is an essential configuration, and tentatively define $g(\l,\Si):=f(\l,A,\Si)+g(\l,A)$.
\begin{itemize}
\item The function $g$ does not depend on the choice of the annulus $A$ : for a configuration $\l\subset A' \subset A\subset \Sigma$, we have that
\begin{eqnarray}
f(\l,A',\Si)+g(\l,A')-f(\l,A,\Si)-g(\l,A)&=&g(\l,A')-g(\l,A)+f(\l,A',\Si)-f(\l,A,\Si)\nonumber\\
&=&-f(\l,A',A)+f(\l,A',A)=0.\nonumber
\end{eqnarray}
\item If for any annulus $A$ the function $g(\ell,A)$ is continuous, then the function $g(\ell,\Sigma)$ is continuous in $\l$ for any Riemann surface $\Sigma$.
\item $f$ and $g$ are related by the coboundary formula: given a configuration $\ell\subset \Si_1\subset\Si_2$, consider an annulus $A\subset\Sigma_1$ such that the configuration $(\ell, A)$ is essential. Then, we have that
\begin{eqnarray}
g(\ell,\Si_2)-g(\ell,\Si_1)&=&f(\l,A,\Si_2)+g(\l,A)-f(\l,A,\Si_1)-g(\l,A)\nonumber\\
&=&f(\ell,\Si_1,\Si_2).\nonumber
\end{eqnarray}
\end{itemize}
\end{proof}

\begin{lem}\label{lem:rep}
If $f$ is a cocycle, the function $f(\S^1,A,\C\P^1)$ (for configurations where $\S^1$ winds non-trivially around the annulus $A$) only depends on the conformal type of $(\S^1,A)$.
\end{lem}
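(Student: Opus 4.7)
Fix a witness biholomorphism $\phi\colon A_1\to A_2$ with $\phi(\S^1)=\S^1$ setwise, and write $\C\P^1\setminus\overline{A_1}=D_1^{\mathrm{in}}\sqcup D_1^{\mathrm{out}}$ for the two complementary Jordan domains of $A_1$. The plan is to realize the pair equivalence $\phi$ by a biholomorphism of $\C\P^1$ onto an auxiliary welded surface, and then to close the remaining gap via the cocycle property together with the continuity of $f$ in the loop variable.

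The first step is to build a Riemann surface $\Sigma$ from the disjoint union $D_1^{\mathrm{in}}\sqcup A_2\sqcup D_1^{\mathrm{out}}$ by identifying each $x\in\gamma_1^{\mathrm{in}}:=\partial D_1^{\mathrm{in}}$ with $\phi(x)\in\gamma_2^{\mathrm{in}}\subset\partial A_2$, and similarly on the outer side. Since $\phi$ is holomorphic on all of $A_1$, one can build charts on $\Sigma$ near the welded interface by taking $\C\P^1$-coordinates on the $D_1^{\bullet}$ side and pulling back by $\phi^{-1}$ on the $A_2$ side; the two charts agree on the seam, so $\Sigma$ is a bona fide Riemann surface. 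Being a topological sphere, $\Sigma\cong\C\P^1$ by uniformization. The piecewise map $\Phi\colon\C\P^1\to\Sigma$ equal to $\phi$ on $A_1$ and to the identity on $D_1^{\mathrm{in}}\cup D_1^{\mathrm{out}}$ is continuous across the welded boundary by construction and holomorphic on each open piece, hence is a biholomorphism; since $\Phi(A_1)=A_2$ and $\Phi(\S^1)=\S^1$, conformal invariance of $f$ yields the intermediate equality
$$f(\S^1,A_1,\C\P^1)=f(\S^1,A_2,\Sigma).$$

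It remains to identify $f(\S^1,A_2,\Sigma)$ with $f(\S^1,A_2,\C\P^1)$, and here lies the main obstacle: although $\Sigma\cong\C\P^1$, any biholomorphism $\psi\colon\Sigma\to\C\P^1$ generically sends $\S^1$ to an analytic Jordan curve that is not Möbius-equivalent to $\S^1$, so the two triples $(\S^1,A_2,\Sigma)$ and $(\S^1,A_2,\C\P^1)$ are typically not conformally equivalent and conformal invariance alone does not conclude. To bridge the gap, I would pick a common essential sub-annulus $A_0\subset A_1\cap A_2$ of $\S^1$; the cocycle relation then gives
$$f(\S^1,A_1,\C\P^1)-f(\S^1,A_2,\C\P^1)=f(\S^1,A_0,A_2)-f(\S^1,A_0,A_1),$$
and conformal invariance of $f$ applied to $\phi$ rewrites the second right-hand term as $f(\S^1,\phi(A_0),A_2)$. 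The problem thus reduces to showing $f(\S^1,A_0,A_2)=f(\S^1,\phi(A_0),A_2)$ for the pair-equivalent sub-annuli $A_0$ and $\phi(A_0)$ inside $A_2$, i.e.\ the same statement as the lemma, but now inside the smaller ambient surface $A_2$. Iterating this reduction with successive common sub-annuli exhausting arbitrarily thin neighborhoods of $\S^1$, the ambient also shrinks towards $\S^1$; the key technical step is then to convert this shrinking into a vanishing estimate, presumably by exploiting the continuity of $f$ in $\ell$ to replace $\S^1$ by nearby loops along which the corresponding triples do become conformally equivalent, and then passing to the limit by a compactness argument. Making this limit argument rigorous is the main difficulty I anticipate.
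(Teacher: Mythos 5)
Your proposal does not close the lemma; it reduces it to itself and then gestures at a limit argument that cannot work as described. The welding step, even setting aside the technical issue that a biholomorphism $\phi\colon A_1\to A_2$ between arbitrary open annuli need not extend to $\partial A_1$ (so the identification along $\partial D_1^{\mathrm{in}}$ is not well defined in general), produces exactly the configuration $(\S^1,A_2,\Sigma)$ whose comparison with $(\S^1,A_2,\C\P^1)$ is the whole content of the lemma --- as you acknowledge. Your second step is correct as far as it goes: the cocycle identity plus conformal invariance show that the discrepancy $f(\S^1,A_1,\C\P^1)-f(\S^1,A_2,\C\P^1)$ equals the analogous discrepancy computed with the sub-annuli $A_0$ and $\phi(A_0)$ inside the smaller ambient $A_2$. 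But this is precisely the problem: the quantity is \emph{invariant} under your reduction, not decaying, so shrinking the annuli towards $\S^1$ produces no vanishing estimate, and there is no sequence of nearby loops along which the two triples become conformally equivalent (the obstruction is the boundary diffeomorphism $\phi|_{\S^1}$, which does not go away as the annuli shrink). Continuity of $f$ in $\ell$ plays no role here; the paper's proof of this lemma does not use it at all.

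The missing idea is algebraic. Your reduction identity shows exactly that the quantity $\rho(\psi)=f(\S^1,\psi(A),\C\P^1)-f(\S^1,A,\C\P^1)$ is independent of the choice of a small annular neighborhood $A$ of $\S^1$ to which the analytic circle diffeomorphism $\psi$ extends injectively; a second application of the cocycle identity then shows that $\rho\colon\Diff\to(\R,+)$ is a group homomorphism. The lemma then follows because \emph{every} homomorphism from $\Diff$ to $(\R,+)$ is trivial (Corollary \ref{cor:morph}), which rests on the perfectness of $\Diff^+$ (Proposition \ref{prop:diff-simple}), itself a consequence of Herman's theorem that analytic circle diffeomorphisms with Diophantine rotation number are analytically conjugate to rotations. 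This group-theoretic input is the genuinely non-trivial step, and it is absent from your proposal; no amount of shrinking annuli or compactness will substitute for it.
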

In particular, the quantity $f(\S^1,A,\C\P^1)$ does not depend on the embedding $(\S^1,A)\hookrightarrow(\S^1,\C\P^1)$.

Before we give the proof of this Lemma, let us define the group $\Diff$ of analytic diffeomorphisms of the circle. An element $\psi\in\Diff$ is a map from the unit circle $\S^1$ to itself such that:
\begin{itemize}
\item The map $\psi$ is analytic: seeing $S^1$ as the quotient $\R/2\pi$, the map $\psi$ is a $2\pi$-periodic real-analytic map from $\R$ to itself.
\item The map $\psi$ is a bijection.
\item The derivative of $\psi$ does not vanish (together with the preceding items, this is equivalent to asking that $\psi$ admits an analytic inverse).
\end{itemize}
The group law on $\Diff$ is given by composition.
\begin{proof}
Given a cocycle $f$, we define a morphism $\rho$ from the group $\Diff$ of analytic diffeomorphisms of the circle to $(\R,+)$.

Pick a diffeomorphism $\psi\in\Diff$ and consider a small enough annular neighborhood $A$ of $\S^1$ such that $\psi$ extends to $A$ as an injective holomorphic map. We consider the map $\rho:\Diff \to \R$ given by $\rho(\psi) = f(\S^1,\psi(A),\C\P^1)-f(\S^1,A,\C\P^1)$.

\begin{itemize}

\item The quantity $\rho(\psi)$ does not depend on the choice of $A$ : given an annulus $A'$ such that $\S^1\subset A' \subset A$, we have that
\begin{eqnarray}
&&f(\S^1,\psi(A'),\C\P^1)-f(\S^1,A',\C\P^1)-f(\S^1,\psi(A),\C\P^1)+f(\S^1,A,\C\P^1)\nonumber\\
&=&f(\S^1,\psi(A'),\C\P^1)-f(\S^1,\psi(A),\C\P^1)+f(\S^1,A,\C\P^1)-f(\S^1,A',\C\P^1)\nonumber\\
&=&f(\psi(\S^1),\psi(A'),\psi(A))-f(\S^1,A',A)=0.\nonumber
\end{eqnarray}

\item The map $\rho$ is a group morphism. Indeed, let $\psi,\phi \in \Diff$, and let $A$ be an annulus such that $\psi_{|A}$ and $\phi_{|\psi(A)}$ are injective maps. Then, we have
\begin{eqnarray}
\rho(\phi\circ\psi)&=& f(\S^1,\phi\circ\psi(A),\C\P^1)-f(\S^1,A,\C\P^1)\nonumber\\
&=& f(\S^1,\phi\circ\psi(A),\C\P^1)-f(\S^1,\psi(A),\C\P^1)+f(\S^1,\psi(A),\C\P^1)-f(\S^1,A,\C\P^1)\nonumber\\
&=&\rho(\phi)+\rho(\psi).\nonumber
\end{eqnarray}
\end{itemize}

However, any morphism $\rho:\Diff\rightarrow(\R,+)$ needs to be trivial (Corollary \ref{cor:morph}, Appendix \ref{sec:app}). Hence, given two embeddings $(\S^1,A)$ and $(\S^1,\psi(A))$ of the same configuration in $\C\P^1$, $f(\S^1,\psi(A),\C\P^1)-f(\S^1,A,\C\P^1)=\rho(\psi)=0$.
\end{proof}

\begin{prop}\label{prop:obstruction}
If $f$ is a cocycle, there exists a (not necessarily continuous) function $g$ such that  $f(\l,A_1,A_2)=g(\l,A_2)-g(\l,A_1)$ on essential analytic configurations $\ell\subset A_1\subset A_2$.
\end{prop}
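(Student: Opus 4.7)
The plan is to exploit the analyticity of $\ell$ to transport a small annular neighborhood of $\ell$ inside $A$ to a standard neighborhood of $\S^1\subset\C\P^1$, where Lemma~\ref{lem:rep} provides the needed rigidity. Concretely, for each essential analytic configuration $(\ell,A)$, pick an annular neighborhood $U$ of $\ell$ inside $A$ together with a conformal embedding $\phi:U\hookrightarrow\C\P^1$ such that $\phi(\ell)=\S^1$ and $(\ell,U)$ is essential; such a pair $(U,\phi)$ exists by the very definition of analyticity of $\ell$. Set
$$
g(\ell,A):=f(\ell,U,A)-f(\S^1,\phi(U),\C\P^1).
$$
The first term is well-defined because $U\subset A$ is an essential analytic sub-configuration, and the second is well-defined using the ambient $\C\P^1$.

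The heart of the argument is to show that $g(\ell,A)$ does not depend on the auxiliary choice $(U,\phi)$. Given two candidates $(U_0,\phi_0)$ and $(U_1,\phi_1)$, shrinking to a common annular neighborhood $U$ of $\ell$ contained in $U_0\cap U_1$ reduces to two elementary cases. Case (a): $U_1\subset U_0$ and $\phi_1=\phi_0|_{U_1}$. The cocycle property in $A$ yields $f(\ell,U_0,A)-f(\ell,U_1,A)=-f(\ell,U_1,U_0)$, while the cocycle in $\C\P^1$ combined with conformal invariance along $\phi_0$ yields $f(\S^1,\phi_0(U_0),\C\P^1)-f(\S^1,\phi_0(U_1),\C\P^1)=-f(\ell,U_1,U_0)$ as well, so the two discrepancies cancel in $g$. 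Case (b): $U_0=U_1=U$ but $\phi_0\neq\phi_1$. Only the $\C\P^1$-correction can differ, but $\phi_1\circ\phi_0^{-1}$ is a conformal isomorphism $\phi_0(U)\to\phi_1(U)$ sending $\S^1$ to $\S^1$; Lemma~\ref{lem:rep} then gives $f(\S^1,\phi_0(U),\C\P^1)=f(\S^1,\phi_1(U),\C\P^1)$.

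Once $g$ is well-defined, the coboundary identity is immediate. For an essential analytic triple $(\ell,A_1,A_2)$, pick a single $(U,\phi)$ with $U\subset A_1$, so the same pair computes both $g(\ell,A_1)$ and $g(\ell,A_2)$. The $\C\P^1$-correction terms then cancel, and the cocycle relation applied to $U\subset A_1\subset A_2$ gives
$$
g(\ell,A_2)-g(\ell,A_1)=f(\ell,U,A_2)-f(\ell,U,A_1)=f(\ell,A_1,A_2).
$$

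The main obstacle is the well-definedness step, which is precisely where analyticity is used: it provides a canonical local model $(\S^1,\C\P^1)$ against which any neighborhood of $\ell$ can be compared, and the two discrepancies in case (a) only cancel because the same cocycle $f$ governs both sides of the definition. Ultimately this rests on Lemma~\ref{lem:rep} and hence on the nonexistence of nontrivial morphisms $\Diff\to(\R,+)$. Continuity of $g$ is not claimed and should not be expected, since the correction term $f(\S^1,\phi(U),\C\P^1)$ records the conformal type of a small neighborhood of $\ell$, which need not vary continuously in $\ell$ for the uniform-convergence-up-to-reparametrization topology on loops.
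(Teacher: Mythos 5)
Your proof is correct and follows essentially the same route as the paper: your definition $g(\ell,A)=f(\ell,U,A)-f(\S^1,\phi(U),\C\P^1)$ is exactly the paper's $g(\ell,A)=f(\ell,A',A)+g(\S^1,\phi(A'))$ unrolled, with the same cocycle computation handling independence of the auxiliary annulus and Lemma~\ref{lem:rep} handling independence of the embedding. Your two-case organization of the well-definedness check is a slightly more explicit presentation of the same argument, and the coboundary verification is identical.
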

Analytic configurations being dense (and in light of Proposition {prop:ess}), only the (uniform) continuity of $g$ is missing to imply that any cocycle $f$ is a coboundary. However, this is not the case as the cohomology space $\mathcal{H}$ is non-trivial (Proposition \ref{prop:non-trivial}). The obstruction to any cocycle being a coboundary is hence a regularity constraint.

\begin{proof}
Given a cocycle $f$, let us build a function $g$ as claimed.

We look for such a function $g$ such that $g(\S^1,\C\P^1)=0$. We then want to define $g(\S^1,A):=-f(\S^1,A,\C\P^1)$ for all configurations $(\S^1,A)$ where $A$ is an annular neighborhood of the unit circle in the Riemann sphere. This is a coordinate-independent definition thanks to Lemma \ref{lem:rep}.

Given an analytic and essential configuration $\l\subset A$, let us cut a small enough annular neighborhood $A'$ of $\l$ in $A$ such that $(\l,A')$ is conformally equivalent (by a conformal isomorphism $\phi$) to a configuration $(\S^1,\phi(A'))$ where $\phi(A')$ is a subset of the Riemann sphere. We define $g(\l,A):=f(\l,A',A)+g(\S^1,\phi(A'))$.

\begin{itemize}
\item The function $g$ does not depend on the choice of $A'$ : take an annulus $A''$ such that $\l\subset A'' \subset A'$. Then
\begin{eqnarray}
&&f(\l,A',A)+g(\S^1,\phi(A'))-f(\l,A'',A)-g(\S^1,\phi(A''))\nonumber\\
&=&f(\l,A',A)-f(\l,A'',A)+g(\S^1,\phi(A'))-g(\S^1,\phi(A''))\nonumber\\
&=&-f(\l,A'',A')+f(\phi(\l),\phi(A''),\phi(A'))=0.\nonumber
\end{eqnarray}

\item The function $g$ does not depend on the choice of $\phi$, by Lemma \ref{lem:rep}.

\item $f$ and $g$ are related by the coboundary formula. Indeed, given an essential analytic configuration $\ell\subset A'\subset A$, let $A''$ be an annulus such that $\ell\subset A''\subset A'$. Then
\begin{eqnarray}
g(\ell,A)-g(\ell,A')&=&f(\ell,A'',A)-f(\S^1,\phi(A''),\C\P^1)-f(\ell,A'',A')+f(\S^1,\phi(A''),\C\P^1)\nonumber\\
&=&f(\ell,A',A).\nonumber
\end{eqnarray}

\end{itemize}
\end{proof}

\appendix 
\section{The group of analytic diffeomorphisms of the circle does not admit non-trivial morphisms to $(\R,+)$}\label{sec:app}

\begin{prop}\label{prop:diff-simple}
The group $\Diff^+$ of orientation-preserving analytic diffeomorphisms of the circle is perfect: any element of $\Diff^+$ can be written as a finite composition of commutators, i.e. elements of the form $f\circ g\circ f^{-1}\circ g^{-1}$.
\end{prop}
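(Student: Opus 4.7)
My plan is to reduce the claim to the perfectness of the finite-dimensional subgroup $\mathrm{PSL}(2,\R) \subset \Diff^+$, via rotation number theory combined with an analytic linearization result.

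Given $\phi \in \Diff^+$, I would first look at the map $t \mapsto \rho(\phi \circ R_{-t})$, where $\rho$ denotes the Poincar\'e rotation number and $R_t$ is rotation of $\S^1$ by $t$. Lifted to the universal cover, this is a continuous monotone function $\R \to \R$ satisfying $\rho(\phi \circ R_{-(t+1)}) = \rho(\phi \circ R_{-t}) - 1$, and therefore takes every real value. Since Diophantine numbers are dense in $\R$, I can choose $t$ so that $\alpha := \rho(\phi \circ R_{-t})$ is Diophantine. The Herman-Yoccoz analytic linearization theorem then yields $h \in \Diff^+$ with $\phi \circ R_{-t} = h \circ R_\alpha \circ h^{-1}$, so that
\begin{equation*}
\phi = (h \circ R_\alpha \circ h^{-1}) \circ R_t.
\end{equation*}
Since conjugates of commutators are commutators and products of commutators compose, it suffices to show that every rotation $R_s$ is itself a product of commutators in $\Diff^+$.

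For the rotation step, I would exploit the embedding $\mathrm{PSL}(2, \R) \hookrightarrow \Diff^+$ arising from the M\"obius action on the boundary of the Poincar\'e disc. Each rotation $R_s$ is the boundary trace of a M\"obius rotation of the disc about its center, so $R_s \in \mathrm{PSL}(2, \R)$. As $\mathrm{PSL}(2, \R)$ is a non-abelian simple Lie group, its commutator subgroup is a non-trivial normal subgroup and therefore equals $\mathrm{PSL}(2, \R)$ itself. Hence every rotation is already a product of commutators inside $\mathrm{PSL}(2, \R) \subset \Diff^+$, which closes the argument.

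The main obstacle is the appeal to Herman-Yoccoz, a delicate KAM-type small divisor result; a more elementary or purely algebraic proof would be preferable. The usual smooth fragmentation technique appears to fail in the analytic category, since no non-identity analytic diffeomorphism has compact support, and one cannot decompose $\phi$ into factors supported in small arcs of $\S^1$. The monotonicity and periodicity properties of $t \mapsto \rho(\phi \circ R_{-t})$ used in the reduction are standard features of the Poincar\'e rotation number and require no real work.
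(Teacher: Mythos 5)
Your proof is correct and follows essentially the same route as the paper's: adjust by a rotation so that the rotation number lands in the set where Herman's analytic linearization theorem applies, conjugate to a rotation, and handle rotations via the perfectness of the M\"obius subgroup $\mathrm{PSL}(2,\R)\subset\Diff^+$. The only cosmetic differences are that you derive perfectness of $\mathrm{PSL}(2,\R)$ from its simplicity where the paper cites it directly, and you arrange the final commutator identity slightly differently.
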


\begin{proof}
We proceed in several steps.
\begin{itemize}
\item The subgroup of the conformal transformations of the sphere $\C\P^1$ that fix the unit circle is isomorphic to PSL$_2(\R)$, and naturally embeds in the group $\Diff^+$ of orientation-preserving analytic diffeomorphisms of the circle, as the family of maps
$$
z\mapsto e^{i\theta}\frac{z+c}{\overline{c}z+1}.
$$
This subgroup of $\Diff^+$ contains all rotations $R_\theta$ of angle $\theta$, and is well-known to be perfect. In particular, all rotations in $\Diff^+$ are finite compositions of commutators.
\item For any element $f\in\Diff^+$ we define its rotation number $r(f)\in\R/\Z$ in the following way. Let us pick a lift $F:\R \rightarrow \R$, i.e. if $\pi:\R\to\R/\Z\simeq \S^1$ is the canonical projection, we pick a function $F$ such that $\pi\circ F=f\circ\pi$. The rotation number is then given by
$$
r(f)=\lim_{n\to\infty}\frac{F^{(n)}(1)}{n},
$$
where $F^{(n)}$ denotes the composition of $F$ with itself $n$ times. The rotation number (as a real number) comes with an ambiguity of $\Z$ resulting from the choice of a lift $F$.

Analytic diffeomorphisms $f$ whose rotation number belongs to a non-trivial subset $\Theta\subset\R/\Z$ (of full Lebesgue measure) are analytically conjugated to a rotation \cite{Herm_conjdiffcercle}: if $r(f)\in\Theta$, we can find an analytic diffeomorphism $h\in\Diff^+$ such that $f=h^{-1}\circ R_{r(f)}\circ h$.
\item For any diffeomorphism $f\in\Diff^+$, the map $\alpha\mapsto r(R_\alpha\circ f)$ is onto, as a periodic non-decreasing continuous map. For continuity, see e.g. \cite{Kuehn_rotation}: it follows from the fact that the rotation number $r(f)=\frac{p}{q}\in\Q$ if and only if the iterated map $f^{(q)}$ has a fixed point.
\end{itemize}

Hence, given an element $f\in\Diff^+$, we can find an angle $\alpha$ such that the rotation number $r(R_\alpha\circ f)=\theta\in \Theta$. This implies that there exists an element $h$ of $\Diff^+$ such that
$$
R_\alpha\circ f = h^{-1}\circ R_\theta \circ h.
$$

We can then express $f$ in the following way:
$$
f=R_{-\alpha}\circ \left(h^{-1}\circ R_\theta \circ h \circ R_\theta^{-1}\right)\circ R_\theta,
$$
which is a composition of two rotations and a commutator, hence a finite composition of commutators.
\end{proof}

\begin{cor}\label{cor:morph}
The group $\Diff$ of analytic diffeomorphisms of the circle does not admit non-trivial morphism to $(\R,+)$.
\end{cor}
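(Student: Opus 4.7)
The plan is to bootstrap from Proposition \ref{prop:diff-simple} (perfectness of $\Diff^+$) by handling the orientation-reversing part separately. Let $\rho : \Diff \to (\R,+)$ be any group morphism. I would first restrict $\rho$ to the index-two subgroup $\Diff^+$. Since the target $(\R,+)$ is abelian, $\rho$ kills every commutator; by perfectness every element of $\Diff^+$ is a finite product of commutators, so $\rho$ vanishes identically on $\Diff^+$.

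Next I would deal with the orientation-reversing coset. Fix any orientation-reversing analytic involution of the circle, e.g.\ $\sigma(z) = \bar z$; then $\sigma \in \Diff$, $\sigma^2 = \Id$, and $\sigma \notin \Diff^+$, so $\sigma$ represents the nontrivial class of the quotient $\Diff/\Diff^+ \cong \Z/2\Z$. Applying $\rho$ to $\sigma^2 = \Id$ gives $2\rho(\sigma) = 0$ in $(\R,+)$, and since $(\R,+)$ is torsion-free this forces $\rho(\sigma) = 0$.

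Finally, any $f \in \Diff$ can be written as $f = \sigma^{\epsilon} \circ f^+$ with $\epsilon \in \{0,1\}$ and $f^+ \in \Diff^+$ (either $f$ itself preserves orientation, or $\sigma \circ f$ does). Then $\rho(f) = \epsilon \,\rho(\sigma) + \rho(f^+) = 0$. Hence $\rho \equiv 0$, as claimed.

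There is no real obstacle here — the whole content of the corollary is the perfectness statement of Proposition \ref{prop:diff-simple}; the only additional ingredient is the elementary observation that $\Z/2\Z$ admits no nonzero morphism into a torsion-free group. The one small thing to verify is that $\Diff^+$ really has index two in $\Diff$, i.e.\ that composition with a single orientation-reversing element exhausts the other coset, which is immediate from the continuity of the derivative of an analytic diffeomorphism on $\S^1$.
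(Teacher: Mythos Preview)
Your proof is correct and follows essentially the same approach as the paper: both use perfectness of $\Diff^+$ to kill the index-two subgroup, then observe that the remaining quotient $\Z/2\Z$ admits no nontrivial morphism to the torsion-free group $(\R,+)$. The only difference is cosmetic---the paper phrases the second step as factoring $\rho$ through $\Diff/\Diff^+\cong\Z/2\Z$, while you pick a concrete involution $\sigma$ and compute directly.
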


\begin{proof}
Given a group morphism $\rho:G\rightarrow A$ taking values in an abelian group $A$, the kernel of $\rho$ is a group that contains all commutators of $G$. In particular, by Proposition \ref{prop:diff-simple}, given a group morphism $\rho:\Diff\rightarrow \R$, the subgroup $\Diff^+\subset\Diff$ of orientation-preserving diffeomorphisms is in the kernel of $\rho$.

Hence $\rho$ factors through $\rho:\Diff\to \Diff/\Diff^+\simeq\Z/2\Z \to \R$, where the first map is the canonical quotient map, and where the second map needs to be trivial, as there are no non-trivial morphisms from $\Z/2\Z$ to $(\R,+)$.
\end{proof}

\section*{Acknowledgements}
I would like to thanks Yves Benoist and Julien Dub\'edat for helpful discussions.

\bibliography{biblio}{}
\bibliographystyle{alpha}

\end{document}